  \newcommand{\tikzmath}[2][]
     {\vcenter{\hbox{\begin{tikzpicture}[#1]#2
                     \end{tikzpicture}}}
     }
\newcommand{\nc}[2]{\newcommand{#1}{#2}}
\nc{\C}{\mathbb{C}}
\nc{\R}{\mathbb{R}}
\nc{\Q}{\mathbb{Q}}
\nc{\Z}{\mathbb{Z}}
\nc{\N}{\mathbb{N}}
\nc{\cA}{\mathcal{A}}
\nc{\cB}{\mathcal{B}}
\nc{\g}{\mathfrak{g}}
\newcommand{\DeclareMyOperator}[1]{ \expandafter\DeclareMathOperator\csname #1\endcsname{#1} }
\def\CS{\mathit{C\hspace{-.4mm}S}}
\def\pt{pt}
\def\Rep{\mathrm{Rep}}
\def\Vect{\mathrm{Vect}}
\def\Bim{\mathrm{Bim}}
\def\op{\mathrm{op}}
\def\lnorm{
{\tikzmath{\useasboundingbox (-.12,-.1) rectangle (.13,.1);\node[scale=.8]{$\scriptscriptstyle \mathrm{l.n.\!\!}$};}}
}
\begin{document}
\title{What Chern-Simons theory assigns to a point}
\author{Andr\'e Henriques}
\date{}
\begin{singlespace}
\maketitle
\end{singlespace}

\newtheorem{theorem}{Theorem}%[section]
\newtheorem*{theorem*}{Theorem}
\newtheorem*{maintheorem}{Main Theorem}
\newtheorem{lemma}[theorem]{Lemma}
\newtheorem*{tech-lemma}{Technical lemma}
\newtheorem{proposition}[theorem]{Proposition}
\newtheorem{corollary}[theorem]{Corollary}
\newtheorem{definition}[theorem]{Definition}
\newtheorem{observation}[theorem]{Observation}
\newtheorem{exercise}[theorem]{Exercise}
\newtheorem{conjecture}[theorem]{Conjecture}
\newtheorem*{conjecture*}{Conjecture}
\newtheorem{question}[theorem]{Question}
\newtheorem*{question*}{Question}

\newenvironment{maindefinition}[1][Main Definition.]{\begin{trivlist}
\item[\hskip \labelsep {\bfseries #1}]}{\end{trivlist}}

\theoremstyle{remark}
\newtheorem{remark}[theorem]{Remark}
\newtheorem*{remark*}{Remark}
\newtheorem{example}[theorem]{Example}
\newtheorem*{example*}{Example}

\abstract{In this note, we answer the questions \emph{``What does Chern-Simons theory assign to a point?''}
and \emph{``What kind of mathematical object does Chern-Simons theory assign to a point?''}.\,

Our answer to the first question is \emph{representations of the based loop group}.
More precisely, we identify a certain class of projective unitary representations of the based loop group $\Omega G$ that we locally normal representations. %call positive energy representations.
We define the fusion product of such representations and we prove that, modulo certain conjectures,
the Drinfel'd centre of that representation category %locally normal %positive energy 
of $\Omega G$ is equivalent to the category of positive energy representations of the free loop group $LG$.
The above mentioned conjectures are known to hold when the gauge group is abelian or of type~$A_1$.

Our answer to the second question is \emph{bicommutant categories}.
The latter are higher categorical analogs of von Neumann algebras: they are tensor categories that are equivalent to their bicommutant inside %a certain fixed tensor category 
$\Bim(R)$, the category of bimodules over a hyperfinite $\mathit{III}_1$ factor.
We prove that, modulo certain conjectures, the category of locally normal %positive energy 
representations of the based loop group is a bicommutant category.
The relevant conjectures are known to hold when the gauge group is abelian or of type $A_n$.

Our work builds on the formalism of coordinate free conformal nets, developed jointly with A. Bartels and C. Douglas.
}

\tableofcontents

%\section{Introduction}

%\addtocounter{subsection}{-1}
\section{Chern-Simons theory}

The Chern-Simons theories are certain $3$-dimensional topological quantum field theories introduced by Witten \cite{MR990772}.
They are parametrised by a compact Lie group $G$ known as the \emph{gauge group}, and a cohomology class $k\in H^4(BG,\Z)$ known as the \emph{level} of the theory
\cite{MR1048699, MR1337109, MR1898192}. The Chern-Simons action %
%\footnote{The action is also often written as $S=\frac{1}{4\pi}\int_{M^3}\;\!\big\langle A\wedge dA\big\rangle_k+\tfrac23 \big\langle A\wedge A\wedge A\big\rangle_k$}
\begin{equation}\label{eq: Lagrangian of CS}
\qquad S=\frac{1}{4\pi}\int_{M^3}\;\!\big\langle A\wedge dA\big\rangle_k+\tfrac13 \big\langle A\wedge [A\wedge A]\big\rangle_k\qquad\text{\small $($mod  $2\pi)$}
\end{equation}
is a functional of $G$-bundles with connections over compact $3$-manifolds.
Here, $A$ is the connection form, $\langle\,\,\,\,\rangle_k:\g\otimes\g\to \R$ is a certain metric constructed from the level,
and the integral is taken over a global section of the principal bundle.\footnote{ When $G$ is not simply connected, principal bundles over $3$-manifolds can fail to have global sections and so
one cannot use the formula \eqref{eq: Lagrangian of CS} to define the action.
See
\cite{MR2174418, MR1048699, MR3330242}
for ways to overcome this difficulty.}
We should point out that not every level $k\in H^4(BG,\Z)$ yields a quantum field theory.
For example, it is important that $\langle\,\,\,\,\rangle_k$ be non-degenerate.
In this paper, we will only deal with those levels $k$ that satisfy the following positivity condition: %(see Section \ref{sec : H^4} for more details):

\begin{definition}\label{def: positive level}
Let $G$ be a compact Lie group.
A level $k\in H^4(BG,\Z)$ is positive if its image under the Chern-Weil homomorphism $H^4(BG)\to \mathrm{Sym}^2(\g^*)^G$ is a positive definite
symmetric bilinear form $\langle\,\,\,\,\rangle_k:\g\otimes\g\to \R$.
\end{definition}

We will write $\CS_{G,k}$ for the Chern-Simons theory associated to the gauge group $G$ and the level~$k$.
In the case of finite gauge groups, Chern-Simons theory is also known as Dijkgraaf--Witten theory.

It is well known that a $G$-bundle with connection is a critical point of the Chern-Simons action functional (a classical solution of the equations of motion) if and only if the connection is flat.
Such bundles are called local systems. % and
We write $\mathrm{Loc}_G(M)$ for the space of gauge equivalence classes of $G$-local systems on a manifold~$M$.\bigskip

From a mathematical point of view, the formula
\[
\CS_{G,k}(M)=\int_{\left\{\parbox{3.2cm}{\tiny Gauge equivalence classes of\newline $G$-bundles with connection $A$ on the $3$-manifold $M$}\right\}}e^{iS[A]}\,\,\mathcal DA
\]
used to `define' the value of the TQFT on a $3$-manifold $M$ (a.k.a. the value at $M$ of the partition function) is vacuous,
because the measure $\mathcal DA$ remains to be described.
However, the quantum Hilbert space $\CS_{G,k}(\Sigma)$ associated to a Riemann surface $\Sigma$ can and has been defined at a mathematical level of precision.
Following the prescription of \emph{canonical quantisation}, it is the geometric quantisation of $\mathrm{Loc}_G(\Sigma)$ with respect to the natural symplectic structure coming from the Chern-Simons Lagrangian 
\cite{MR1100212, MR1065677} 
(see e.g. \cite[\S6.1]{MR1701599} for a discussion of the symplectic structure).

Therefore, from a mathematical perspective, a TQFT that recovers the above quantum Hilbert spaces may be called `Chern-Simons theory'.
We will explain below that, at least when $G$ is simply connected (and presumably also when it is just connected), the Reshetikhin--Turaev TQFT associated to the
modular tensor category $\Rep^k(LG)$ of positive energy representations of the loop group at level $k$ has that property.

\section{Extended TQFTs}\label{sec: Extended TQFTs}
In the functorial approach to quantum field theory, a $d$-dimensional quantum field theory is a functor from a certain cobordism category,
whose objects are $(d-1)$-dimensional manifolds and whose morphisms are $d$-dimensional cobordisms,
to the category of vector spaces \cite{MR1001453, MR2079383}.
\emph{Extended quantum field theory} has been proposed by Lawrence \cite{MR1273575},
and later Freed \cite{MR1273572, MR1256993} and Baez-Dolan \cite{MR1355899},
as an enhancement of the functorial approach in which one assigns values to not only 
$d$- and $(d-1)$-dimensional manifolds, but also to $(d-2)$-dimensional, all the way down to $0$-dimensional manifolds.

In his influential paper \cite{MR1256993}
(see also \cite{MR1240583}),
Freed argued that Dijkgraaf--Witten theory fits into the framework of extended TQFT.
Using a `categorified path integral', he computed the value of that theory on the circle and showed that it is $\Rep(D^k(\C[G]))$,
the representation category of the $k$-twisted Drinfel'd double of the group algebra of $G$.
Freed did not extend the theory all the way down to points, even though the case $k=0$ of $\CS_{G,k}(\pt)$ is implicit in his paper.

The representation category of $D^k(\C[G])$ is equivalent to $\Vect^k_G[G]$,
the category of $k$-twisted equivariant vector bundles on $G$ with respect to the adjoint action of the group on itself \cite{MR2443249}.
The latter can in turn be identified with $Z(\Vect^k[G])$, the Drinfel'd centre
of the category of $G$-graded vector spaces with $k$-twisted convolution product~\cite{Wray-thesis}.

Summarising, for finite gauge group $G$, we have:
\begin{equation}\label{eq: Vect^k_G[G]}
\qquad\CS_{G,k}(S^1)=\Vect^k_G[G]=\Rep(D^k(\C[G]))=Z(\Vect^k[G]).
\end{equation}
The latter was taken as evidence in \cite{Wray-thesis} (see also  \cite{Waldorf-notes}) for the claim that
\begin{equation}\label{eq: CS(pt)=Vect^k[G]}
\qquad\CS_{G,k}(\pt)=\Vect^k[G] \qquad\qquad\text{\small ($G$ finite)}.
\end{equation}
Indeed, it is a general feature of TQFTs that for any manifold $M$ the value on $M\times S^1$ is the centre of the value on $M$
(where the meaning of `centre' depends on the context and in particular on the dimension of the TQFT).
As a special case, the value on $S^1$ should always be the centre of the value on a point.
See 
\cite[Lem. 3.75]{MR2713992} along with the discussion preceding that lemma for a proof in the case of $2$-dimensional TQFTs,
and see the proof of \cite[Prop. 4.9]{MR2648901} for a sketch in the case of $3$-dimensional TQFTs.

A more direct argument why $\Vect^k[G]$ deserves to be called $\CS_{G,k}(\pt)$ can be found in \cite{MR2648901}.
This goes via a certain 3-categorical limit construction which is a sort of discrete path integral.
The construction is described in \cite[\S8.1, ``case $m=2$'']{MR2648901} (compare with \cite[Example 3.14]{MR2648901} for more details on the ``case $m=1$'').\bigskip

When $G$ is a connected Lie group (always assumed compact), extended Chern-Simons theory is generally well understood down to dimension one.
In particular, it is widely agreed that the value on the circle should be the category $\Rep^k(LG)$ of positive energy representations of the loop group $LG$ at level $k$ (see e.g. \cite[\S4]{MR2476413}).
For example, it was shown by Freed and Teleman \cite{MR3334994} that $\Rep^k(LG)$ can be obtained directly from
from $\mathrm{Loc}_G(S^1)$ by a procedure akin to geometric quantisation.
There is a natural gerbe (the analog of a pre-quantum line bundle) on $\mathrm{Loc}_G(S^1)$ whose `sections' form a category.
Freed and Teleman identified a certain subcategory (the polarised sections) and proved that it is equivalent to $\Rep^k(LG)$.\footnote{Twisted loop groups are expected to show up when $G$ is disconnected \cite{MR2831111}.}

Let us present a different argument why
\begin{equation}\label{eq: CS_G,k(S^1)=Rep^k(LG)}
\CS_{G,k}(S^1)=\Rep^k(LG).
\end{equation}
Recall \cite{MR1100212} that the quantum Hilbert space $\CS_{G,k}(\Sigma)$ associated to a surface $\Sigma$ is the geometric quantisation of $\mathrm{Loc}_G(\Sigma)$.
In order the perform the geometric quantisation, one needs a polarisation of $\mathrm{Loc}_G(\Sigma)$, which requires a choice of complex structure on $\Sigma$.
So one gets a vector bundle over the moduli spaces of Riemann surfaces, which moreover comes with a natural projectively flat connection 
\cite{MR1100212, MR1065677}.
Starting instead from the right hand side\footnote{%
Here, we consider the incarnation of $\Rep^k(LG)$ as $\Rep^k(L\g)$, and we restrict to the case when $G$ is simply connected. See the next section for a discussion.} of \eqref{eq: CS_G,k(S^1)=Rep^k(LG)},
there is an associated complex modular functor \cite[Thms 5.7.11 and 6.7.12]{MR1797619},
which is part of the Reshetikhin--Turaev package \cite[Chap 4]{MR1797619}\cite{MR2654259}.
That modular functor is the same as the modular functor of WZW conformal blocks studied in \cite{MR1048605} (the two agree on genus zero curves by definition of the fusion product on $\Rep^k(L\g)$).
To summarise, both sides of \eqref{eq: CS_G,k(S^1)=Rep^k(LG)} are related to constructions of vector bundles with connection over the moduli spaces of Riemann surfaces:
one comes directly from the Chern-Simons Lagrangian,
and the other one comes from the modular tensor category $\Rep^k(LG)$.
Those vector bundles with connection are known to agree \cite{MR1669720} 
(see also
\cite{MR1289330, MR1257326, MR1289830, MR1456243}): that is our evidence for \eqref{eq: CS_G,k(S^1)=Rep^k(LG)}.\bigskip

Freed, Hopkins, Lurie, and Teleman \cite{MR2648901} have suggested to extend the proposal~\eqref{eq: CS(pt)=Vect^k[G]}
to the case when the gauge group is a torus $T$
by letting $\CS_{T,k}(\pt):=\mathrm{Sky}^k[T]$, the category of skyscraper sheaves on $T$ (that is, $T$-graded vector spaces with grading supported on finitely many points of $T$ and finite dimensional in each degree), with $k$-twisted associator.
As in \eqref{eq: Vect^k_G[G]}, the Drinfel'd centre\footnote{This holds for a certain variant of the Drinfel'd centre called `continuous Drinfel'd centre'.} of $\mathrm{Sky}^k[T]$ is equivalent to $\mathrm{Sky}^k_T[T]$.
But, unfortunately, this does not satisfy the desired property $Z(\mathrm{Sky}_\tau[T])=\Rep^k(LT)$.
It is however not too far off. One has
\[
Z(\mathrm{Sky}^k[T])=\mathrm{Sky}^k_T[T]\,\cong\, \Rep^k(LT)\otimes \mathrm{Sky}^k[\mathfrak t],
\]
and the extra factor $\mathrm{Sky}^k[\mathfrak t]$ can be interpreted as an anomaly.
Alternatively, the authors of \cite{MR2648901} claim that taking instead a certain \emph{relative} Drinfel'd centre of $\mathrm{Sky}_\tau[T]$ does recover $\Rep^k(LT)$.
We do not know of any functor between $\mathrm{Sky}^k[T]$ and our proposed answer for $\CS_{T,k}(\pt)$.

For connected non-abelian groups, the category $\mathrm{Sky}^k[G]$ still makes sense but its centre $\mathrm{Sky}^k_G[G]$ is very small (the conditions of $G$-equivariance and finite support are almost incompatible)
and does not seem very related to $\Rep^k(LG)$.

\section{Loop group representations} \label{I-sec 3}
As explained above, it is generally accepted that, for connected gauge groups, the value of Chern-Simons theory on the circle is $\Rep^k(LG)$, the category of positive energy representations of $LG$ at level $k$.
We recall the definition of that category.
First of all, a level $k\in H^4(BG,\Z)$ induces by transgression a central extension (see \cite[Thm 3.7]{MR2610397}\cite[Prop 5.1.3]{MR3055987})
\begin{equation}\label{eq: SES for LG}
1\,\to\, U(1)\,\to\, \widetilde{LG} \,\to\, LG \,\to\, 1
\end{equation}
of the free loop group $LG=\mathit{Map}(S^1,G)$. % (see Section \ref{sec: LG} for more details).  --------  PUT ME BACK ------  Add Waldorf reference?
The central extension depends on the level, but we suppress that dependence from the notation.
\begin{definition}[\cite{MR900587}]\label{def:RepLG}
Let $G$ be a connected Lie group (always assumed compact) and let $k\in H^4(BG,\Z)$ be a positive level.
A positive energy representation of $LG$ at level $k$ is a continuous\footnote{for the topology on $\widetilde{LG}$ induced by the $C^\infty$ topology on $LG$, and the strong operator topology on $U(H)$.} unitary representation 
$\widetilde{LG}\to U(H)$ that extends\footnote{The extension is never unique; the choice of extension is \emph{not} part of the data of a positive energy representation.} to an action of the semidirect product $S^1\ltimes \widetilde{LG}$
in such a way that the infinitesimal generator of the `energy circle' $S^1$ has positive spectrum and finite dimensional eigenspaces.
The category of positive energy representations of $LG$ at level $k$ is denoted $\Rep^k(LG)$.
\end{definition}
Here, the semidirect product is taken with respect to a certain action of $S^1$ on $\widetilde{LG}$ that lifts the `rotate the loops' action on $LG$.

The category $\Rep^k(LG)$ is known (or, depending on the definition, expected) to be a modular tensor category when equipped with the \emph{fusion product}.
People have historically preferred to work with a number of alternative categories (hopefully all equivalent).
These are (from now on, let us restrict to $G$ simple and simply connected% \footnote{%    ----------- PUT ME BACK
%In Section \ref{sec: LG}, we will explain how to extend the vertex algebra and conformal net approaches to the case of arbitrary connected Lie group $G$. We do not know whether something similar can be done with the approach via quantum groups.}
) the categories of representations of:
\begin{list}{}{\itemsep=-1ex \leftmargin=3.5ex \labelsep=1ex \topsep=-.5ex \parsep=2ex}
\item[\it a.] {\bf Quantum groups at root of unity \rm (see \cite{MR2286123} and references therein).}\\ We call the resulting braided tensor category $\Rep^\mathrm{ss}(U_q\g)$. Here, $q$ is the primitive ${m(k+h^\vee)}$\textsuperscript{th} root of unity, where $m\in\{1,2,3\}$ is the squared ratio of the lengths of the long roots to short roots. The superscript $^\mathrm{ss}$ stands for `semi-simplification' and refers to the operation of restricting to the subcategory of tilting modules, and then modding out by the negligible morphisms. 
The fusion product of representations comes from the Hopf algebra structure on $U_q\g$, and the universal $R$-matrix provides the braiding.
\item[\it b.] {\bf Affine Lie algebras/vertex operator algebras \rm (see \cite{MR3146015} and references therein).}\\
We call the resulting braided tensor category $\Rep^k(L\g)$.
It is the category of integrable highest weight modules of the affine Lie algebra, equivalently, of the corresponding vertex operator algebra (VOA).
The fusion product of representations is given indirectly\footnote{The fusion $W_1\boxtimes W_2$ can also be described directly \cite{MR1327541}, as the graded dual of a judiciously chosen subspace of the algebraic dual of $W_1\otimes W_2$. It is using that approach that the strongest results were obtained.}, by defining for each triple of representations $W_1$, $W_2$, $W_3$ the space of intertwinors 
from $W_1\boxtimes W_2$ to $W_3$.\footnote{See \cite[\S 9.3]{MR2082709} for the equivalence between the affine Lie algebra and VOA approaches.}
The associator is defined by means of the Kniznik--Zamolodchikov ODE over the moduli space of four-punctured spheres, and the braiding is defined similarly.
\item[\it c.] {\bf Conformal nets \rm \cite{MR1231644, MR1645078}\footnote{\cite{MR1231644} deals with all simply connected gauge groups, defines the braiding, but does not compute the fusion rules.
\cite{MR1645078} only deals with the case $G=SU(n)$, whose fusion rules it computes, but it does not discuss the braiding on the category of representations.} (see also
\cite[Sec 4.{\sc c}]{BDH-cn1} and references therein).%
}\\
We call the resulting braided tensor category $\Rep_{\mathrm f}(\cA_{G,k})$.%
\footnote{The subscript ${}_\mathrm{f}$ means that we only take representations which are \emph{finite} direct sums of irreducible ones.
We reserve the notation $\Rep(\cA_{G,k})$ for a category where infinite direct sums are allowed.}
Here, the conformal net $\cA_{G,k}$ assigns von Neumann algebras $\cA_{G,k}(I)$ to every interval $I\subset S^1$.
A representation is a Hilbert space with left actions of all those algebras.
The fusion of representations is based on
Connes' relative tensor product $\boxtimes$, also known as Connes fusion.\vspace{.1cm}
It is given by $H\boxtimes_{\cA_{G,k}(\tikz[scale=.3]{\useasboundingbox (-.5,-.2) rectangle (.5,.2);\draw (-.5,0) arc(-180:0:.5);\draw[->] (.15,-.5) -- +(.01,0);})}K$,
where the right action on $H$ uses the isomorphism
$\cA_{G,k}(\tikz[scale=.5]{\useasboundingbox (-.5,-.2) rectangle (.5,.2);\draw (-.5,0) arc(-180:0:.5);\draw[->] (.05,-.5) -- +(.01,0);})^\op\cong
\cA_{G,k}(\tikz[scale=.5]{\useasboundingbox (-.5,-.2) rectangle (.5,.2);\draw (-.5,0) arc(180:0:.5);\draw[->] (-.05,.5) -- +(-.01,0);})$
induced by reflection along the horizontal axis.
%The braiding is described in Section \ref{sec: Reps of nets}.       ---------- PUT ME BACK ----
\end{list}\vspace{3ex}

%\noindent
%Once can also define the relevant categories directly via {\it skein theory} (see [????????] and references therein).
%This is a combinatorial approach, where the categories are defined by generators and relations. 
%It has the advantage of being very explicit, but the construction has to be started anew for every family of Lie groups.

The approaches {\it a.} and {\it b.} have been well studied.
In particular, it is known that for every group $G$ and level $k\ge 0$ they describe modular tensor categories, and that they are additively equivalent to $\Rep^k(LG)$.
Moreover, the modular tensor categories obtained via {\it a.} and {\it b.} are known to be equivalent by combining the works of Finkelberg \cite{MR1384612, MR3053762}
and of Kazhdan--Lusztig (\cite{MR1186962, MR1239506, MR1239507, MR1104840} for the simply laced case; \cite{MR1276910, MR2105507} for the non simply laced case;
the exceptional cases $E_6$, $E_7$, $E_8$ level $1$ and $E_8$ level 2, where the results of Kazhdan and Lusztig do not apply, require an ad hoc analysis~\cite{MO:178113}).
%
%In the special case of the group $G=SU(n)$, there is also the work of Andersen--Ueno
%\cite{MR2339577, MR2306213, MR2928086, Andersen-Ueno(Reshetikhin-Turaev-from-CFT)}, who
%directly identify the WZW modular functor of affine $\mathfrak{sl}(n)$ with the one 
%associated to 
%the skein theory model of Blanchet \cite{MR1710999}.             % ------ REMOVE?

The approach {\it c.\!} is less developed.
So far, only the following results appear to be know: for the group $G=SU(n)$, the braided tensor category is modular \cite{MR1838752, MR1776984},
and its fusion rules agree with those of the corresponding modular tensor categories constructed via {\it a.\!} and {\it b.\!} \cite[\S 34]{MR1645078}\footnote{The work of Wassermann does not exclude the possibility of `exotic' representations of $\cA_{SU(n),k}$, that do not come from representation of the affine Lie algebra.
Those can indeed be excluded by combining \cite[Thm 3.5.1 + eq. on line 2 of p.18]{MR1776984} with \cite[Thm 33 + Cor 39]{MR1838752}.} %, or by our Proposition (...) when $n\ge 3$.}  ------ PUT ME BACK ----
(the latter are well known \cite{MR1328736, MR2286123}, see also \cite{MR1360497, MR1257326, MR1456243}\,\cite[\S 7.3]{MR1797619}).
Even for $G=SU(n)$, the categories constructed via {\it a.} (or {\it b.}) and {\it c.} are not known to be equivalent as  braided tensor categories, unless $n=2$.
For other Lie groups, the braided tensor category {\it c.\!} is not known to be fusion (e.g., the tensor product multiplicities are not known to be finite)
and also not known to be additively equivalent to the one constructed via {\it a.} or {\it b.} % --- see (...) and (...) for a partial result.   ---------PUT ME BACK --------
Despite all the above, the following conjecture is widely believed to be true:

\begin{conjecture} \label{conj: 1}
For every simple simply connected Lie group $G$, and every level $k\ge 0$, 
the categories $\Rep^k(LG)$ defined via {\it a.} (or {\it b.}), and {\it c.} are equivalent as balanced tensor categories\footnote{A balanced tensor category is a tensor category with a braiding and a twist \cite{MR1107651}.}.
\end{conjecture}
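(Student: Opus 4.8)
The plan is to prove Conjecture~\ref{conj: 1} by reducing it to a single comparison. Since the constructions {\it a.} and {\it b.} already yield equivalent modular tensor categories (by Finkelberg together with Kazhdan--Lusztig, modulo the handful of exceptional cases noted above), it suffices to produce a balanced tensor equivalence $\Rep^k(L\g)\simeq\Rep_{\mathrm f}(\cA_{G,k})$ between the affine Lie algebra / VOA category and the conformal net category. I would build this equivalence in four stages: construct a comparison functor, show it is an equivalence of abelian categories, upgrade it to a tensor functor, and finally check compatibility with the braiding and the twist.

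For the comparison functor, I would start from the WZW vertex operator algebra $V_{L\g,k}$ and its integrable highest weight modules. The strategy is to integrate each such module $W$ to a representation of the net $\cA_{G,k}$: the smeared vertex operators $Y(a,f)=\int Y(a,z)\,f(z)\,dz$, with $a\in V_{L\g,k}$ and $f$ supported in an interval $I\subset S^1$, should give rise to closed operators affiliated to $\cA_{G,k}(I)$ whose action on $W$ makes it a Hilbert space representation of the net. This step rests on the vertex-operator-algebra-to-conformal-net dictionary of Carpi--Kawahigashi--Longo--Weiner: one must verify that $V_{L\g,k}$ is strongly local and that the net it generates coincides with $\cA_{G,k}$, and then invoke the analytic estimates (energy bounds on the smeared fields) that turn a VOA module into a net representation. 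Having the functor, I would prove it is an equivalence of abelian categories in two parts. Full faithfulness amounts to identifying intertwiners of VOA modules with intertwiners of net representations, which follows once the net algebras are generated by the smeared fields. Essential surjectivity is a completeness statement: every representation of $\cA_{G,k}$ must arise from a VOA module, i.e. there are no \emph{exotic} representations. For $G=SU(n)$ this is known (it is exactly the input that, combined with Wassermann's fusion computation, pins down the net representation category); in general it would follow from complete rationality of $\cA_{G,k}$ in the sense of Kawahigashi--Longo--M\"uger, together with a matching of the number of sectors with the number of integrable modules.

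The tensor structure is the analytic heart of the argument, and I expect it to be the main obstacle. One must identify the VOA fusion product $\boxtimes$---defined through spaces of intertwining operators and the Huang--Lepowsky theory---with the Connes fusion $\boxtimes$ of net representations. The natural route is Gui's framework of \emph{categorical extensions}: one shows that the intertwining operators of $V_{L\g,k}$, once smeared and shown to be closable with the appropriate strong (braided) locality, supply exactly the bounded intertwiners that implement Connes fusion, and that the resulting associativity isomorphisms match. This requires sharp energy estimates on the intertwining operators of general WZW models---the analytic input that is currently established for $SU(n)$ but not uniformly for all simple simply connected $G$; securing these estimates is the crux of the problem.

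Finally, with the tensor equivalence in hand, compatibility with the braiding and the twist should be comparatively formal. Both braidings are governed by the analytic continuation and geometric exchange of intertwining operators---the Knizhnik--Zamolodchikov monodromy on the VOA side and the DHR-type exchange of localisation intervals on the net side---so once the intertwining operators have been identified with the Connes-fusion intertwiners, the two braidings coincide by transport of structure. Both twists are implemented by the conformal Hamiltonian through $e^{2\pi i L_0}$, and since the functor preserves the conformal weights $h_i$ of the simple objects, the balancing is matched automatically. Thus the genuinely new work is concentrated in the third stage, and the conjecture would follow in full generality the moment the requisite energy bounds for the WZW intertwining operators are available for all simple simply connected $G$.
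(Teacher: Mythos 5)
The statement you set out to prove is a conjecture: the paper does not prove it, and neither do you. Your text is a program, not a proof, and you say so yourself --- the tensor-structure stage rests on energy bounds for WZW intertwining operators that you admit are currently established for $SU(n)$ but not for all simple simply connected $G$, and your essential-surjectivity step similarly invokes the absence of exotic representations and complete rationality of $\cA_{G,k}$, both of which are open for general $G$ (the paper itself notes that even finiteness of the $\mu$-index of $\cA_{G,k}$ is open outside type $A$). Reducing an open conjecture to other open statements is valuable as a research plan --- the route you describe, via Carpi--Kawahigashi--Longo--Weiner strong locality and Gui-style identification of smeared intertwining operators with Connes-fusion intertwiners, is indeed the natural one --- but it is not a proof, and no amount of formal work in your stages one, two and four can close the analytic gap you concede in stage three.

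It is also worth contrasting your plan with what the paper actually establishes. For $G=SU(2)$ the paper gives a complete argument of an entirely different kind, in which no comparison functor is constructed at all: by Wassermann's work the fusion rules of $\Rep_{\mathrm f}(\cA_{SU(2),k})$ agree with the affine ones, and then a recognition theorem (\cite[Prop.~8.2.6]{MR1239440}) says that balanced tensor categories with $SU(2)$ level $k$ fusion rules are determined by the entries of their $T$-matrix; those entries are exponentials of conformal weights both on the VOA side \cite{MR2468370} and on the conformal net side \cite{MR1410566}, hence agree. That rigidity-style argument yields a genuine theorem in the one case where such a recognition result is available, whereas your functorial approach, were the energy bounds ever supplied, would yield the general case together with a canonical equivalence. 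As things stand, however, both approaches leave Conjecture~\ref{conj: 1} exactly where the paper leaves it: open.
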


For $G=SU(2)$, the above conjecture can be proved as follows.\footnote{This argument, as well as the one below for tori, was communicated by Marcel Bischoff.} 
As mentioned earlier, the fusion rules are known to agree by the work of Wassermann.
By \cite[Prop. 8.2.6]{MR1239440}\footnote{That proposition only applies to the levels $k\ge 2$. See p.387 of that same reference for a discussion of the (easier) case $k=1$.},
balanced tensor categories with $SU(2)$ level $k$ fusion rules are determined by the entires of their $T$-matrix\footnote{also known as conformal spins, or balancing phases.}.
The latter are the exponentials of the conformal weights, both
when the modular tensor category comes from a VOA \cite[Thm 4.1]{MR2468370} and when it comes from a conformal net \cite{MR1410566}, and therefore agree.

In the case when $G$ is connected but not simply connected, the vertex operator algebra and conformal net approaches still make sense \cite{WZW-classification}:
the VOA/conformal net associated to a connected Lie group is a simple current extension of the tensor product of one associated to an even lattice and one associated to a simply connected Lie group.
%(see Definitions \ref{def: loop group conf net} and \ref{def: loop group VOA}). ---- PUT ME BACK -----
The above conjecture can thus be generalised:

\begin{conjecture} \label{conj: 2}
For every connected Lie group $G$ and every positive level $k\in H^4(BG,\Z)$, 
the categories $\Rep^k(LG)$ defined via vertex operator algebras and via conformal nets are equivalent as balanced tensor categories.
\end{conjecture}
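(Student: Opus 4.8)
The plan is to reduce the conjecture to Conjecture~\ref{conj: 1} together with the already-settled abelian (lattice) case, exploiting the structural description recalled above: by \cite{WZW-classification}, for a connected compact $G$ both the vertex operator algebra $V_{G,k}$ and the conformal net $\cA_{G,k}$ are simple current extensions of a tensor product $V_\Lambda\otimes V_{\tilde\g,\tilde k}$, respectively $\cA_\Lambda\otimes\cA_{\tilde G,\tilde k}$, where $\Lambda$ is an even lattice and $\tilde G$ is simple and simply connected.

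First I would recast the relation between a simple current extension and the representation category of the underlying algebra in purely categorical terms. On both sides such an extension corresponds to a connected \'etale (commutative Frobenius) algebra $A$ in the modular tensor category $\mathcal{C}:=\Rep^{\tilde k}(L\tilde G)\boxtimes\Rep(\Lambda)$, and the finite-direct-sum representation category of the extended object is the category $\mathcal{C}_A^{\mathrm{loc}}$ of local $A$-modules. On the VOA side this is the Huang--Kirillov--Lepowsky / Kirillov--Ostrik description; on the conformal net side it is the theory of $\alpha$-induction and local extensions (Longo--Rehren, Bischoff--Kawahigashi--Longo--Rehren). The underlying object of $A$ is $\bigoplus_{g\in Z}X_g$, a direct sum over a finite abelian group $Z$ (dual to the finite kernel of $\tilde G\times T\to G$) of invertible simple currents $X_g$.

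Next I would assemble the input equivalences. By the abelian case the lattice categories agree, $\Rep(\cA_\Lambda)\simeq\Rep(V_\Lambda)$ as balanced tensor categories, and by Conjecture~\ref{conj: 1} we have $\Rep_{\mathrm f}(\cA_{\tilde G,\tilde k})\simeq\Rep^{\tilde k}(L\tilde G)$; taking Deligne products yields a single equivalence $\Phi$ of balanced (indeed modular) tensor categories between the net-side and VOA-side incarnations of $\mathcal{C}$. The crux is then to show that $\Phi$ carries the net-side algebra $A^{\mathrm{net}}$ to the VOA-side algebra $A^{\mathrm{VOA}}$. Because $\Phi$ preserves the braiding and the twist, it automatically sends invertible objects to invertible objects, matches the groups $Z$ of simple currents, and preserves the quadratic form on $Z$ recorded by the twists $\theta_{X_g}$. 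A simple current extension---that is, an \'etale algebra structure on such a pointed direct sum---is rigidly determined by this quadratic form (Lagrangian-algebra rigidity), so $A^{\mathrm{net}}$ and $A^{\mathrm{VOA}}$ are identified under $\Phi$. Finally, the assignment $A\mapsto\mathcal{C}_A^{\mathrm{loc}}$ is functorial for braided/balanced equivalences carrying $A$ to $A'$, so $\Phi$ descends to an equivalence $\Rep_{\mathrm f}(\cA_{G,k})\simeq\Rep^k(LG)$ of balanced tensor categories, which is the claim.

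The main obstacle is precisely the matching of algebra objects in the previous step. While $\Phi$ manifestly matches $Z$ and the quadratic form, there is in general a residual cohomological ambiguity---an $H^2(Z,U(1))$ worth of associator and commutativity data, or equivalently a choice of extension of the central subgroup---in upgrading an underlying pointed object to an \'etale algebra. Pinning this down requires either a vanishing of the relevant cohomology (automatic for cyclic $Z$, and in the lattice-only or simply-connected-only extremes) or a finer version of the input equivalences that tracks not merely the fusion rules and twists but the actual morphism spaces and associativity constraints among the simple currents. Establishing that the two extensions arise from the same $2$-cocycle---ultimately because both are governed by the same geometric datum, the finite central kernel defining $G$ at the level $k$---is the technical heart of the argument.
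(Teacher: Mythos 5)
You take essentially the same route as the paper: the paper does not prove Conjecture~\ref{conj: 2} outright, but reduces it to Conjecture~\ref{conj: 1} plus the known torus case, invoking exactly your mechanism that the representation category of a simple current extension is determined by that of the underlying object (\cite[Thm 3.4]{arXiv:1406.3420} on the VOA side, \cite[Prop 6.3]{arXiv:1410.8848} on the conformal net side). The obstacle you flag at the end --- matching the two \'etale algebras under the equivalence --- is passed over in silence by the paper, and is in fact milder than you fear: commutative algebra structures on $\bigoplus_{g\in Z}X_g$ for an isotropic $Z$ are unique up to isomorphism because symmetric $U(1)$-valued $2$-cocycles on a finite abelian group are coboundaries, so the only genuine residual point is that the equivalence of Conjecture~\ref{conj: 1} can be chosen to carry the subgroup $Z$ of simple currents on one side to the corresponding subgroup on the other.
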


When the gauge group is a torus, Conjecture~\ref{conj: 2} follows from known computations on the VOA side 
\cite[Chapt 12]{MR1233387}\,\cite{MR1245855} 
and the conformal net side
\cite{MR2261756, Staszkiewicz-thesis},
because a modular tensor category all of whose objects are invertible
is entirely determined by its fusion rules and by the entries of its $T$-matrix \cite[Prop 2.14]{MR2076134}.
Conjecture~\ref{conj: 2} actually follows from Conjecture~\ref{conj: 1} and the case of tori:
the representation category of an extension can be described entirely in terms of 
the representation category of the original vertex operator algebra \cite[Thm 3.4]{MR3339173} or conformal net \cite[Prop 6.3]{arXiv:1410.8848}.
So the two conjectures are equivalent.
\bigskip

%An alternative to the approaches {\it a.}, {\it b.} and {\it c.} explained above
%is to follow \cite{MR900587} and define $\Rep^k(LG)$ to be the category of positive energy (projective) unitary representations of the smooth loop group $LG:=\mathit{Map}(S^1,G)$.
%Unfortunately, to our knowledge, nobody has managed to define the monoidal structure on $\Rep^k(LG)$ using that approach.

\noindent
{\it Digression.} We now explain an attempt due to Graeme Segal at defining the fusion product directly on $\Rep^k(LG)$.
We follow and expand the discussion in \cite[\S5]{MR2079383}.
Given a pair of pants $\Sigma$ with complex structure in the bulk and analytically parametrised boundary $\partial \Sigma=S_1\cup S_2\cup S_3$,
one ought to be able to define the fusion product as follows.\footnote{The fusion product will depend on $\Sigma$.
Given $\Sigma_1$ and $\Sigma_2$, the corresponding bifunctors $\boxtimes_1$ and $\boxtimes_2$ will be equivalent, but non-canonically so.
In other words, $H_\lambda\boxtimes_1H_\mu$ and $H_\lambda\boxtimes_2H_\mu$ will be isomorphic but with no given isomorphism $H_\lambda\boxtimes_1H_\mu\to H_\lambda\boxtimes_2H_\mu$, unless additional data is provided.}
Let $H_\lambda$ and $H_\mu$ be two unitary positive energy representations of $LG$,
and let $\check H_\lambda$ and $\check H_\mu$ be their dense subspaces of analytic vectors for the rotation action of $S^1$.
On those subspaces, the projective action of the loop group extends to its complexification $LG_\C:=\mathit{Map}(S^1,G_\C)$.
Let $\mathrm{Bun}_{G}(\Sigma;S_3)$ be the moduli space of holomorphic $G_\C$-bundles over $\Sigma$, trivialised over $S_3$.
Similarly, let $\mathrm{Bun}_{G}(\Sigma;\partial\Sigma)$ be the moduli space of holomorphic bundles trivialised over the whole boundary,
and note that $\mathrm{Bun}_{G}(\Sigma;S_3)$ is the quotient of $\mathrm{Bun}_{G}(\Sigma;\partial \Sigma)$ by the gauge action of $LG_\C\times LG_\C$ at $S_1$ and $S_2$.
%Equivalently, this is the quotient of $\mathrm{Bun}_{G}(\Sigma;\partial \Sigma)$ by the gauge action of $LG_\C\times LG_\C$ at $S_1$ and $S_2$.
Let $\widetilde{\mathrm{Bun}_{G}(\Sigma;\partial \Sigma)}$ be the $k$-th power of the determinant bundle over $\mathrm{Bun}_{G}(\Sigma;\partial \Sigma)$, minus its zero section
(see \cite{MR1690736} for a definition).
The action of $LG_\C\times LG_\C$ on $\mathrm{Bun}_{G}(\Sigma;\partial \Sigma)$ then lifts to an action of $\widetilde{LG_\C\times LG_\C}:=\widetilde{LG}_\C\times_{\C^\times} \widetilde{LG}_\C$ on $\widetilde{\mathrm{Bun}_{G}(\Sigma;\partial \Sigma)}$, whose quotient is again $\mathrm{Bun}_{G}(\Sigma;S_3)$.
So we may consider the vector bundle
\[
\mathrm{H}_{\lambda\mu}:=\widetilde{\mathrm{Bun}_{G}(\Sigma;\partial \Sigma)}\times_{\widetilde{LG_\C\times LG_\C}} \check H_\lambda\otimes \check H_\mu
\]
over $\mathrm{Bun}_{G}(\Sigma;S_3)=\mathrm{Bun}_{G}(\Sigma;\partial \Sigma)/LG_\C\times LG_\C$.
Segal's hope is that one should be able to define the fusion product $H_\lambda\boxtimes H_\mu$ as an appropriate completion of the space
\begin{equation}\label{eq: too big}
\Gamma_{\lambda\mu}:=\Gamma^{hol}\big(\mathrm{Bun}_{G}(\Sigma;S_3),\mathrm{H}_{\lambda\mu}\big)
\end{equation}
of holomorphic sections of $\mathrm{H}_{\lambda\mu}$ (or rather, a completion of a dense subspace thereof).\footnote{An alternative proposal is to use $\Gamma^{hol}(\mathrm{Bun}_{G}(\Sigma;S_3),\mathrm{H}_{\lambda\mu}')'$, where the prime means dual topological vector space (bundle). Whereas the first proposal \eqref{eq: too big} is in some sense `too big', the second one is possibly `too small'.}
The gauge action at $S_3$ of $LG_\C$ on $\mathrm{Bun}_{G}(\Sigma;\partial \Sigma)$
lifts to an action of the central extension $\widetilde{LG}_\C$ on $\widetilde{\mathrm{Bun}_{G}(\Sigma;\partial \Sigma)}$.
The latter yields an action of $\widetilde{LG}_\C$ on $\mathrm{H}_{\lambda\mu}$, and therefore on its space of holomorphic sections.

There are many difficulties in implementing the above ideas. Fist of all, it is not clear how to endow $\Gamma_{\lambda\mu}$ with an inner product (probably only defined on a dense subspace)
and therefore not clear how to perform the completion.
Most importantly, it is not clear that $\Gamma_{\lambda\mu}$ is a positive energy representation.
In other words, it is not clear that the action of $\widetilde{LG}$ extends to the semidirect product $S^1\ltimes \widetilde{LG}$ (most likely, the action extends to $S^1\ltimes \widetilde{LG}$ only on a dense subspace of $\Gamma_{\lambda\mu}$).\footnote{We
thank Graeme Segal for explaining to us the state of his program, and the difficulties encountered.}
When $G$ is a torus, the above difficulties were overcome by Posthuma \cite{MR2925299}.
The general case was thought to have been treated in \cite{Posthuma(PhD-thesis)}, but a mistake was later found in \cite[Prop.~5.5]{Posthuma(PhD-thesis)}\footnote{We thank Hessel Posthuma for clarifications on this point.}.

\section{The value on a point: $\Rep^k(\Omega G)$}  \label{I-sec 4}
Let $G$ be a connected Lie group.
It is a general feature of $3$-dimensional TQFTs that the value on $S^1$ is the Drinfel'd centre of the value on a point.
%Recall also that it is generally understood that $\CS_{G,k}(S^1)=\Rep^k(LG)$.

Thus, following  \cite[\S4]{MR2476413},
we take the point of view that a tensor category $T=T_{G,k}$ deserves to be called the value of Chern-Simons theory on a point if its Drinfel'd centre $Z(T)$ is equivalent to $\Rep^k(LG)$. 
The question \emph{`What does Chern-Simons theory assign to a point?'} therefore reduces to:

\begin{question*}
Find a tensor category $T_{G,k}$ whose Drinfel'd centre $Z(T_{G,k})$ is equivalent to the category $\Rep^k(LG)$
of positive energy representations of the loop group at level~$k$.
\end{question*}

We will argue that the category of representations of the based loop group at level $k$ offers an answer to the above question, and hence deserves to be called the value of Chern-Simons theory on a point:
\begin{equation*} %\label{kjksfnlbnfgbnfl}
Z\big(\Rep^k(\Omega G)\big)=\Rep^k(LG)
\end{equation*}
As explained in the previous section, there are more than one possible meanings for $\Rep^k(LG)$. So we need to be a little bit more precise.

Let $H_0\in\Rep^k(LG)$ be the vacuum representation of $LG$ at level $k$ (the unit for the fusion product).
Given an interval $I\subset S^1$, we write $L_IG\subset LG$ for the subgroup of loops with support in that interval,
we write $\widetilde{L_I G}$ for the corresponding central extension (induced by \eqref{eq: SES for LG}),
and $\cA_{G,k}(I)$ for the von Neumann algebra generated by $\widetilde{L_I G}$ inside the bounded operators on $H_0$.
The assignment $\cA_{G,k}:I\mapsto \cA_{G,k}(I)$ is a conformal net.
From now on, we take $\Rep_{\mathrm f}(\cA_{G,k})$ as our working definition of $\Rep^k(LG)$.
We will actually be working with the slightly larger category $\Rep(\cA_{G,k})=\Rep_{\mathrm f}(\cA_{G,k})\otimes_{\mathsf{Vec}_{\mathrm {f\!\!\;.\!\!\;d\!\!\;.}}}\mathsf{Hilb}$ (\cite[\S3.2]{colimits}).
The latter can also be described as the category of unitary representations of $\widetilde{LG}$
with the property that, for every interval $I \subset S^1$, the action of $\widetilde{L_I G}$ extends to an action of $\cA_{G,k}(I)$ \cite[Thm.\,26]{colimits}.
We call such representations \emph{locally normal}, and we denote them $\Rep^k_\lnorm(LG)$.

%In this paper, assuming Conjectures \ref{conj: 1} and \ref{conj: 2}, w
Assuming Conjectures \ref{conj: 1} and \ref{conj: 2},
we will prove that, 
provided one replaces $\Rep_{\mathrm f}(\cA_{G,k})$ by the slightly larger category $\Rep^k_\lnorm(LG)%=\Rep^k(LG)\otimes_{\mathsf{Vec}_{\mathrm {f\!\!\;.\!\!\;d\!\!\;.}}}\mathsf{Hilb}
=\Rep(\cA_{G,k})$, the category %$T_{G,k}:=\Rep^k_\lnorm(\Omega G)$ 
of locally normal representations of the based loop group at level $k$ satisfies
\begin{equation} \label{kjksfnlbnfgbnfl}
Z\big(\Rep^k_\lnorm(\Omega G)\big)=\Rep^k_\lnorm(LG).
\end{equation}
We take this as evidence for our claim that $\Rep^k_\lnorm(LG)$
%and therefore deserves to be called 
is the value of Chern-Simons theory on a point.
(Of course, if one adopts the Platonic point of view that $\CS_{G,k}(\pt)$ is something that \emph{exists} and whose value we are trying to compute,
then our reasoning according to which, since
%\footnote{Equation \eqref{kjksfnlbnfgbnfl} is actually not quite right. The correct version is $Z(\Rep^k(\Omega G))=\Rep^k(LG)\otimes_{\mathsf{Vec}_{\mathrm {f\!\!\;.\!\!\;d\!\!\;.}}}\mathsf{Hilb}$, i.e., one needs to remove the condition that the energy operator has finite dimensional eigenspaces in Definition~\ref{def:RepLG}.}
equation \eqref{kjksfnlbnfgbnfl} holds, it must be the case that
\begin{equation}\label{eq: CS(pt)=Rep(OmG)}
\,\qquad\qquad\qquad\qquad 
\CS_{G,k}(\pt)=\Rep^k_\lnorm(\Omega G)
\qquad\quad\text{\footnotesize ($G$ connected)}
\end{equation}
does not constitute a proof, as there might exist other tensor categories with the same Drinfel'd centre.)

%Let us now define positive energy representations of the based loop group at level $k$.
Let $\Omega G\subset LG$ be the subgroup consisting of loops $\gamma:S^1\to G$ such that $\gamma(1)=e$ and such that all the higher derivatives of $\gamma$ vanish at that point, and
let $\widetilde{\Omega G}$ be the corresponding central extension, inherited from \eqref{eq: SES for LG}.
%be the restriction of the central extension \eqref{eq: SES for LG} to the subgroup $\Omega G\subset LG$ .
 % and
%let $H_0\in\Rep^k(LG)$ be the vacuum representation of $LG$ at level $k$ (the unit for the fusion product).
%Given an interval $I\subset S^1$, we write $L_IG\subset LG$ for the subgroup of loops with support in $I$,
%and $\widetilde{L_I G}$ %\subset \widetilde{LG}$ 
%for the corresponding central extension.
%Finally, we let $\cA_{G,k}(I)$ be the von Neumann algebra generated by $\widetilde{L_I G}$ inside the algebra of bounded operators on $H_0$.

\begin{list}{}{\itemsep=-1ex \leftmargin=3.5ex \labelsep=1ex \topsep=2ex \parsep=2ex \rightmargin=3.5ex}
\item{\bf Main definition.}
A unitary representation of $\widetilde{\Omega G}$ on a Hilbert space is a \emph{locally normal representation of $\Omega G$ at level $k$} if for every interval $I \subset S^1$ such that the base point $1\in S^1$ is not in the interior of $I$, the 
action of $\widetilde{L_I G}$ extends to an action of the von Neumann algebra $\cA(I):=\cA_{G,k}(I)$.
We write $\Rep^k_\lnorm(\Omega G)$ for the category of locally normal representations of $\Omega G$ at level $k$.
The monoidal structure on $\Rep^k_\lnorm(\Omega G)$ is given by
\[
(H,K)\mapsto H\boxtimes_{\cA(\tikz[scale=.3]{\useasboundingbox (-.5,-.2) rectangle (.5,.2);\draw (-.5,0) arc(-180:0:.5);\draw[->] (.15,-.5) -- +(.01,0);})}K
\]
as in the definition of fusion of representations of conformal nets.
The actions of
$\cA(\tikz[scale=.5]{\useasboundingbox (-.5,-.2) rectangle (.5,.2);\draw (-.5,0) arc(-180:0:.5);\draw[->] (.05,-.5) -- +(.01,0);})$ on $H$ and of
$\cA(\tikz[scale=.5]{\useasboundingbox (-.5,-.2) rectangle (.5,.2);\draw (-.5,0) arc(180:0:.5);\draw[->] (-.05,.5) -- +(-.01,0);})$ on $K$ induce actions of those same algebras on the fused Hilbert space, which in turn uniquely extend to an action of $\widetilde{\Omega G}$ \cite[Lem.\,4.4]{Bicommutant-categories-from-conformal-nets}\cite[Thm.\,31]{colimits}.\medskip
\end{list}

Note that since the algebras $\cA_{G,k}(I)$ are type $\mathit{III}$ factors \cite[Thm 2.13]{MR1231644}\,\cite[Prop.\,1.2]{MR1410566} \cite[Prop 6.2.9]{Longo(Lectures-on-Nets-II)},
the following is an equivalent\footnote{Provided one restricts to separable Hilbert spaces.} description of the category $\Rep^k(\Omega G)$:
a representation of the based loop group is locally normal if it is either zero or, for every interval $I\subset S^1$ such that the base point is not in the interior of $I$, its restriction to $\widetilde{L_I G}$ is equivalent to $H_0$, the vacuum representation at level $k$.

We conjecture that our notion of locally normal representation of $\Omega G$ at level $k$ admits the following alternative description:
a Hilbert space with a strongly continuous action of $\widetilde{\Omega G}$ that extends to $\R\ltimes\widetilde{\Omega G}$ in such a way that the spectrum of the infinitesimal generator of $\R$ is positive.
Here, the semi-direct product is taken with respect to loop reparametrisations by M\"obius transformations that fix the base point $1\in S^1$.
%In Proposition (...), we will construct a functor going in one direction.   ------ PUT ME BACK ---
Such a description would be attractive because directly parallel to the classical definition \cite{MR900587} of positive energy representation of~$LG$ at level $k$.

\begin{remark}
One possible objection to out proposal \eqref{eq: CS(pt)=Rep(OmG)} is that the category $\Rep^k_\lnorm(\Omega G)$ is too big
(it is neither rigid nor even abelian; in particular, it falls completely outside of the framework of \cite{arXiv:1312.7188}).
This is however completely unavoidable. 
By the results of \cite[\S 5.5]{MR3039775} (see also \cite{MR1966525}),
if there is a fusion category $T$ such that $Z(T)\cong \Rep^k(LG)$, then the latter must have central charge divisible by~$8$, a fact which only holds in very few cases.
\end{remark}

\section{Conformal nets and bicommutant categories}  \label{I-sec 5}

The motivating examples come from loop groups,
but our results apply to any conformal net with appropriate finiteness conditions.

For every conformal net $\cA$, we consider the following tensor category $T_\cA$.
The objects of $T_\cA$ are Hilbert spaces equipped with compatible actions of the algebras $\cA(I)$ for every 
interval $I \subset S^1$ such that the base point of $1\in S^1$ is not in the interior of $I$.
Such representations are known in the conformal net literature as \emph{solitons}
\cite{MR1652746, MR1892455, MR1332979, MR2100058}.
When $\cA$ is $\cA_{G,k}$, the conformal net associated to a loop group, this recovers the category $\Rep^k_\lnorm(\Omega G)$ of locally normal representations of the based loop group \cite[Thm.\,31]{colimits}.

Recall \cite{MR1838752} that there is a certain invariant $\mu(\cA)\in \R_+\cup\{\infty\}$ of a conformal net, called the $\mu$-index.
By definition, it
is the Jones--Kosaki index \cite{MR696688, MR829381} of the subfactor \smallskip
\[
\cA(I_1)\vee \cA(I_3)\subset (\cA(I_2)\vee \cA(I_4))',\smallskip
\]
where $I_1, I_2, I_3, I_4$ are intervals that cover the circle as follows:
\tikz[scale=.5]{
\useasboundingbox (-1.2,-.1) rectangle (1.2,.8);
\draw (0:1)+(.1,.1) arc (0:90:1)(0:1)+(.1,-.1) arc (0:-90:1)(0:-1)+(-.1,.1) arc (180:90:1)(0:-1)+(-.1,-.1) arc (-180:-90:1);
\node at (1.05,1) {$\scriptstyle I_1$};\node at (-1.05,1) {$\scriptstyle I_2$};\node at (-1,-1) {$\scriptstyle I_3$};\node at (1.05,-1) {$\scriptstyle I_4$};}
and the prime denotes the\hspace{-.1mm} commutant,\hspace{-.1mm} taken\hspace{-.1mm} on\hspace{-.1mm} the\hspace{-.1mm} vacuum\hspace{-.1mm} sector\hspace{-.1mm} of\hspace{-.1mm} the\hspace{-.1mm} con-\hspace{1.6cm}formal net.
It has the property \cite{MR1838752, MR2100058} that\footnote{The last two conditions in \eqref{eq: 3 equivalent defs of finite mu} are phrased in a somewhat sloppy way.
The correct way to formulate them is to say that $\Rep(\cA)\cong \mathcal C\otimes_{\mathsf{Vec}_{\mathrm {f\!\!\;.\!\!\;d\!\!\;.}}}\mathsf{Hilb}$ for some fusion (modular) tensor category~$\mathcal C$.}
\begin{equation}\label{eq: 3 equivalent defs of finite mu}
\mu(\cA)<\infty
\quad\Leftrightarrow\quad
\Rep(\cA)\text{ is fusion}
\quad\Leftrightarrow\quad
\Rep(\cA)\text{ is modular}
\end{equation}
(see \cite[Sec. 3]{BDH-cn1}\cite[Sec. 3C]{BDH-cn2} for an alternative proof of this fact).

Our main result says that, for conformal nets with finite $\mu$-index, the
Drinfel'd centre of $T_\cA$ is equivalent to the category of representations of the conformal net:

\begin{maintheorem}\label{thm: If Rep(cA_G,k) is modular}
If $\cA$ has finite $\mu$-index, then there is an equivalence of balanced tensor categories $Z(T_\cA)\cong\Rep(\cA)$.
\end{maintheorem}

\noindent
The proof of this theorem is the content of our companion paper \cite{Bicommutant-categories-from-conformal-nets}
(see \cite[Rem.\,10]{Bicommutant-categories-from-conformal-nets} for a discussion of the balanced structure on $Z(T_\cA)$ and on $\Rep(\cA)$).\footnote{
The braiding on $\Rep(\cA)$ defined in \cite[Sec. 3B]{BDH-cn2} has not been compared to the one used in \cite{MR1838752}.
We can therefore not exclude the possibility that, when $\mu(\cA)<\infty$, the category $\Rep(\cA)$ has two \emph{distinct} modular structures.
In \cite{Bicommutant-categories-from-conformal-nets}, we prove our main theorem for the braided structure \cite{MR1838752}.}

It is widely expected that the conformal nets associated to loop groups have finite $\mu$-index (in which case the above theorem could be applied to them), but this remains an open problem.
At the moment, the best result in that direction is the one of Xu \cite{MR1776984}, based on the work of Wassermann \cite{MR1645078}, according to which the conformal nets associated to $SU(n)$ have finite $\mu$-index.
\bigskip

Going back to the special case of Chern-Simons theory, and using the fact that $\Rep(\cA_{G,k})=\Rep^k_\lnorm(LG)$, we have the following corollaries of our main theorem:
\begin{corollary}\label{thm:3}
For $G=SU(n)$, the Drinfel'd centre $Z(\Rep^k_\lnorm(\Omega G))$ of the category of locally normal representations of the based loop group at level $k$
is equivalent as a balanced tensor category to $\Rep^k_\lnorm(LG)$, the category of locally normal representations of the free loop group at level $k$.
\end{corollary}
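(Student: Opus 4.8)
The plan is to obtain this corollary as a direct application of the Main Theorem, once its finiteness hypothesis is verified for $\cA=\cA_{SU(n),k}$ and once the tensor category $T_{\cA_{G,k}}$ is identified with $\Rep^k(\Omega G)$. There are thus two inputs to assemble, neither of which requires new work beyond invoking the appropriate results, after which the conclusion is formal.

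First I would record the identification $T_{\cA_{G,k}}\cong\Rep^k(\Omega G)$ of tensor categories. As explained in Section~\ref{I-sec 5}, the objects of $T_\cA$ are the solitons of the conformal net $\cA$, that is, Hilbert spaces carrying compatible actions of the algebras $\cA(I)$ for all intervals $I\subset S^1$ whose interior avoids the base point $1\in S^1$. When $\cA=\cA_{G,k}$ is the loop group conformal net, such a soliton is precisely a positive energy representation of $\Omega G$ at level $k$ in the sense of the Main Definition, and this identification is compatible with the monoidal structures, since both fusion products are given by Connes fusion over the algebra of the lower semicircle; see \cite{colimits}.

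Second I would verify the finiteness of the $\mu$-index for $\cA=\cA_{SU(n),k}$. This is exactly the point at which the restriction to $G=SU(n)$ enters. By the theorem of Xu \cite{MR1776984}, building on Wassermann's analysis of the $SU(n)$ conformal nets \cite{MR1645078}, one has $\mu(\cA_{SU(n),k})<\infty$; equivalently, via \eqref{eq: 3 equivalent defs of finite mu}, $\Rep(\cA_{SU(n),k})$ is modular. It is precisely the absence of an analogous finiteness statement for other gauge groups that prevents the corollary from being stated for general~$G$.

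With both inputs in hand, the Main Theorem yields an equivalence of balanced tensor categories $Z(T_{\cA_{SU(n),k}})\cong\Rep(\cA_{SU(n),k})$, and substituting the identification of the second paragraph gives $Z(\Rep^k(\Omega G))\cong\Rep(\cA_{G,k})$ for $G=SU(n)$, with the balanced structure transported as discussed in \cite[Rem.\,10]{Bicommutant-categories-from-conformal-nets}. The only substantive content is therefore imported from the two cited results; the main obstacle, were one to try to drop the hypothesis, would be establishing finiteness of the $\mu$-index for gauge groups beyond $SU(n)$, which remains open.
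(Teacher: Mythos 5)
Your proposal is correct and follows exactly the paper's own route: the paper's proof is the one-line observation that $\mu(\cA_{SU(n),k})<\infty$ by Xu's theorem \cite{MR1776984}, with the identification $T_{\cA_{G,k}}\cong\Rep^k(\Omega G)$ (via \cite{colimits}) and the Main Theorem doing the rest, just as you spell out. Your write-up simply makes explicit the two inputs the paper leaves implicit in its citation.
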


\begin{proof}
$\mu(\cA_{SU(n),k})<\infty$ by \cite{MR1776984}.
\end{proof}

\begin{corollary}\label{thm:4}
For every connected Lie group $G$ for which Conjecture \ref{conj: 2} holds, there is an equivalence of balanced tensor categories 
between $Z(\Rep^k_\lnorm(\Omega G))$ and the category of positive energy representations of the free loop group at level $k$ (Definition~\ref{def:RepLG}),
provided one removes the condition that the energy operator has finite dimensional eigenspaces.
\end{corollary}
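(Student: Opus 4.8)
The plan is to obtain Corollary~\ref{thm:4} by feeding the hypothesis of the Main Theorem to the conformal net $\cA := \cA_{G,k}$, the only missing ingredient being the finiteness of its $\mu$-index, which I would extract from Conjecture~\ref{conj: 2}. Fix a connected compact Lie group $G$ and a positive level $k$ for which Conjecture~\ref{conj: 2} holds.

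First I would establish that $\mu(\cA) < \infty$. The affine Lie algebra / vertex operator algebra category $\Rep^k(L\g)$ is a modular tensor category; this is classical for simple simply connected $G$, and for general connected $G$ it follows from the realisation of $\Rep^k(L\g)$ as a simple current extension of a product of a lattice-type and a simply-connected-type category, as recalled in the discussion preceding Conjecture~\ref{conj: 2}. By Conjecture~\ref{conj: 2} there is an equivalence of \emph{balanced} tensor categories $\Rep^k(L\g) \cong \Rep_{\mathrm f}(\cA)$; since a balanced tensor equivalence transports rigidity, semisimplicity, the braiding and the twist, it preserves modularity, so $\Rep_{\mathrm f}(\cA)$ is modular as well. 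Invoking the equivalence \eqref{eq: 3 equivalent defs of finite mu} --- read, as its footnote prescribes, as the statement that $\Rep(\cA) \cong \mathcal C \otimes_{\mathsf{Vec}_{\mathrm{f.d.}}} \mathsf{Hilb}$ for a modular $\mathcal C$, with $\mathcal C = \Rep_{\mathrm f}(\cA)$ --- modularity of $\Rep_{\mathrm f}(\cA)$ yields $\mu(\cA) < \infty$.

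With the $\mu$-index now finite, I would apply the Main Theorem verbatim to $\cA$, producing an equivalence of balanced tensor categories $Z(T_\cA) \cong \Rep(\cA)$. For $\cA = \cA_{G,k}$ one has $T_\cA = \Rep^k(\Omega G)$ by \cite{colimits}, so the left-hand side is the sought-after centre $Z(\Rep^k(\Omega G))$. To finish I would identify the right-hand side with the target category: $\Rep(\cA)$ is by definition $\Rep_{\mathrm f}(\cA)$ with countable direct sums adjoined, i.e. $\Rep_{\mathrm f}(\cA) \otimes_{\mathsf{Vec}_{\mathrm{f.d.}}} \mathsf{Hilb}$, and under the balanced equivalence $\Rep_{\mathrm f}(\cA) \cong \Rep^k(LG)$ of Conjecture~\ref{conj: 2} this becomes $\Rep^k(LG) \otimes_{\mathsf{Vec}_{\mathrm{f.d.}}} \mathsf{Hilb}$, which by the footnote to \eqref{kjksfnlbnfgbnfl} is precisely the category of positive energy representations of the free loop group at level $k$ with the finite-dimensionality condition on energy eigenspaces removed. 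Transporting the balanced structure across this chain gives the asserted equivalence.

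The step I expect to require the most care is the passage from Conjecture~\ref{conj: 2} to $\mu(\cA) < \infty$: it converts a purely categorical input (modularity of the finite-direct-sum category) into the analytic hypothesis of the Main Theorem, and relies on the nontrivial subfactor-theoretic equivalence \eqref{eq: 3 equivalent defs of finite mu}. One must check in particular that the notion of modularity transported across the balanced equivalence of Conjecture~\ref{conj: 2} matches exactly the semisimple part $\mathcal C$ appearing in the footnote to \eqref{eq: 3 equivalent defs of finite mu}. By contrast, the final matching of balanced structures and the passage to infinite direct sums are formal, following from the definitions of $\Rep(\cA)$ and of the $\mathsf{Hilb}$-tensoring.
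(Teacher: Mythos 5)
Your proposal is correct and follows essentially the same route as the paper's proof: modularity of the relevant VOA representation categories (which the paper attributes explicitly to Huang's theorem rather than to ``classical'' results, but it is the same fact) is transported across the balanced equivalence of Conjecture~\ref{conj: 2} to give modularity of $\Rep_{\mathrm f}(\cA_{G,k})$, finiteness of the $\mu$-index then follows from \eqref{eq: 3 equivalent defs of finite mu}, and the Main Theorem yields the equivalence, with the same identifications of $T_{\cA}$ with $\Rep^k(\Omega G)$ and of $\Rep(\cA_{G,k})$ with the enlarged loop group category. The paper's two-line proof is simply a terser version of your argument.
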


\begin{proof}
Conjecture \ref{conj: 2} implies $\mu(\cA_{G,k})<\infty$ because of the equivalence \eqref{eq: 3 equivalent defs of finite mu} and of Huang's theorem
\cite{MR2468370, MR3339173}, according to which the representation categories of the relevant vertex operator algebras are modular.
\end{proof}

By Lurie's classification of extended topological field theories 
\cite[Thm 1.4.9]{MR2555928},
the map $Z\mapsto Z(pt)$ which sends a TQFT $Z$ to its value on the point provides a bijection between extended $n$-dimensional TQFTs and fully dualisable objects in the given target $n$-category.
From the remark at the end of the previous section, it might look like $T_\cA$ is not fully dualisable,
which would imply that there is no TQFT whose value on a point is $T_\cA$.
We believe that it is possible to restore the full dualisability of $T_\cA$ by viewing it as an object not of the $3$-category of tensor categories,
but of a yet to be constructed $3$-category of \emph{bicommutant categories}. %\footnote{Constructing the $3$-category of bicommutant categories will probably require quite some effort. We do not claim to have done this.}
%At this point, we only know how to describe its objects, 1-morphisms, 2-morphisms, and 3-morphisms. We do not have a general definition of composition of 1-morphisms.}

Let $R$ be a hyperfinite $\mathit{III}_1$ factor, and let $\Bim(R)$ denote its bimodule category, equipped with Connes' relative tensor product.
The latter comes with antilinear involutions 
at the level of objects (the contragredient of a bimodule)
and at the level of morphisms (the adjoint of a linear map).

\begin{definition}[{\cite[\S3]{Bicommutant-categories-from-fusion-categories}}]
A bicommutant category is a tensor category $T$ equipped with two involutions as above, and a tensor functor $T\to \Bim(R)$ (often a fully faithful embedding), compatible with the two involutions, so that the natural map $T\to T''$ of the category to its bicommutant is an equivalence.
\end{definition}

Here, we write $T'$ for the commutant of the tensor category $T$.
It is the category whose objects are pairs $(Y,e)$ with $Y\in\Bim(R)$ and $e=\{e_X:X\boxtimes Y\to Y\boxtimes X\}_{X\in T}$ a half-braiding with all the elements of $T$
(which we abusively identify with their image in $\Bim(R)$). %, and whose morphisms are morphisms $X_1\to X_2$ in $\Bim(R)$ that are compatible with the half-braiding.
The bicommutant $T'':=(T')'$ is equipped with a natural `inclusion' functor $T\to T''$. % $T\to T'':X\mapsto (X,\{e_X^{-1}:Y\boxtimes X\to X\boxtimes Y\}_{(Y,e)\in T'})$.
%The following is the other main result of this paper:

\begin{theorem}
If $\cA$ is a conformal net with finite $\mu$-index, then $T_{\cA}$ is a bicommutant category.
\end{theorem}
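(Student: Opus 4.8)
The plan is to compute the commutant and the double commutant of $T_\cA$ inside $\Bim(R)$ explicitly: I will identify $T_\cA'\cong\Rep(\cA)$ using the Main Theorem, then compute $(T_\cA')'\cong T_\cA$, and finally check that the resulting equivalence is the canonical comparison functor $T_\cA\to T_\cA''$. The hypothesis of finite $\mu$-index enters twice—directly through the Main Theorem, and through the modularity of $\Rep(\cA)$ recorded in \eqref{eq: 3 equivalent defs of finite mu}.

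First I would fix the embedding $T_\cA\hookrightarrow\Bim(R)$, sending a soliton $H$ to the $R$--$R$ bimodule built from the two half-circle algebras, both isomorphic to the hyperfinite $\mathit{III}_1$ factor $R$ since every $\cA(I)$ is such a factor. The first step is to show $T_\cA'\cong\Rep(\cA)$. The Drinfel'd centre $Z(T_\cA)$ sits inside the commutant $T_\cA'$ as the full subcategory of those $(Y,e)$ whose underlying bimodule $Y$ is already a soliton, and the Main Theorem identifies this subcategory with $\Rep(\cA)$: a representation of $\cA$ restricts to a soliton, and its half-braiding with solitons is the Doplicher--Haag--Roberts braiding. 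To see that this exhausts $T_\cA'$, I would argue that any bimodule $Y\in\Bim(R)$ carrying a half-braiding with every soliton is itself a soliton—fusing $Y$ with solitons supported in small subintervals $I$ away from the base point, and transporting the actions through the half-braiding, reconstructs compatible actions of all the $\cA(I)$ on $Y$ \cite{colimits}. Hence $T_\cA'=Z(T_\cA)\cong\Rep(\cA)$.

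For the second step I would compute $(T_\cA')'\cong\Rep(\cA)'$, the centralizer of $\Rep(\cA)$ in $\Bim(R)$. Every soliton half-braids canonically with every representation (again the DHR braiding), giving a functor $T_\cA\to\Rep(\cA)'$; this is precisely the canonical functor $T_\cA\to T_\cA''$. Its full faithfulness follows from that of the embedding $T_\cA\hookrightarrow\Bim(R)$ together with the matching of half-braidings, so the real content is essential surjectivity: an arbitrary bimodule $Z$ equipped with a half-braiding with all of $\Rep(\cA)$ must be a soliton. This is where modularity is decisive. Since finite $\mu$-index makes $\Rep(\cA)$ modular, its braiding is non-degenerate, so the centralizer of $\Rep(\cA)$ in $\Bim(R)$ is as small as possible; the half-braiding data on $Z$ then leaves no freedom across the base point beyond a soliton structure, and one identifies $Z$ with an object of $T_\cA$.

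Finally I would verify that the equivalence $\Rep(\cA)'\cong T_\cA$ is inverse to the canonical comparison functor, and that both involutions of the bicommutant structure—contragredient of bimodules and adjoint of morphisms—are preserved; this is routine, since Connes fusion, the DHR braiding, and the bimodule involutions are all compatible with restriction between representations and solitons. The main obstacle is the essential surjectivity in the second step: proving that nothing beyond solitons centralizes all of $\Rep(\cA)$. This is exactly the point at which modularity must be turned into a concrete statement inside $\Bim(R)$—that the non-degenerate braiding admits no exotic centralizing bimodules—and it is the genuinely new input, whereas the first step merely repackages the Main Theorem. I expect this argument to mirror, and ultimately to be carried out in, the companion paper \cite{Bicommutant-categories-from-conformal-nets}.
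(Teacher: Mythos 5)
Your proposal founders on the distinction between the \emph{commutant} of a tensor category inside $\Bim(R)$ and its \emph{Drinfel'd centre}; that the two coincide is precisely what you assert in Steps 1 and 2, and it is false. Here is a test case that kills both steps at once. Let $\cA$ be a holomorphic conformal net (e.g.\ $\cA_{E_8,1}$): it has finite $\mu$-index ($\mu(\cA)=1$), and $\Rep(\cA)\simeq\mathrm{Hilb}$, whose image in $\Bim(R)$ consists of the multiples of the unit object $L^2R$. The commutant of that image is \emph{all} of $\Bim(R)$: the hexagon axiom forces the half-braiding of any bimodule $Y$ with the unit to satisfy $e_{L^2R}=e_{L^2R}^2$, hence to be the identity, and naturality determines it on all multiples; so every $Y$ lies in $\Rep(\cA)'$. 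Your Step 2 claim $\Rep(\cA)'=T_\cA$ would therefore say that \emph{every} $R$-$R$-bimodule is a soliton. That is false: a soliton's two half-circle actions must glue along $\cA(K)$ for intervals $K$ straddling $-1$, and for the split bimodule $L^2R\otimes L^2R$ (left action on the first factor, right action on the second) such a gluing would produce a normal isomorphism $\cA(K_1)\vee\cA(K_2)\cong\cA(K_1)\bar\otimes\cA(K_2)$ for adjacent intervals, i.e.\ the split property across a shared boundary point, which fails for every nontrivial conformal net. Note that $\Rep(\cA)=\mathrm{Hilb}$ here is (trivially) modular, so this example refutes, rather than merely questions, your key inference: non-degeneracy of the braiding controls the M\"uger centre $\Rep(\cA)\cap\Rep(\cA)'$, the centralizer computed \emph{inside} $\Rep(\cA)$, and places no bound at all on the commutant inside the ambient $\Bim(R)$.

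The same conflation undoes Step 1. An object of $T_\cA'$ is a pair $(Y,e)$ with $Y$ an \emph{arbitrary} bimodule; the Main Theorem computes only the full subcategory $T_\cA\cap T_\cA'$ of pairs whose underlying bimodule is a soliton, and says nothing about the rest. Your one-line argument that a half-braiding with all solitons forces $Y$ to be a soliton is exactly the hard step, and it cannot work as stated: nothing in it produces the gluing of the two half-circle actions of $Y$ across the point $-1$, which is what membership in $T_\cA$ means; at best, transporting actions through half-braidings yields compatibility at the basepoint $+1$, exhibiting $Y$ as a soliton for the \emph{antipodal} cut. That is, roughly, the correct answer, and it is where the actual proof---which the present paper does not contain, deferring entirely to the companion paper \cite{Bicommutant-categories-from-conformal-nets}---diverges from yours: there the commutant $T_\cA'$ is identified not with $\Rep(\cA)$ but with a second, reflected copy of the soliton category, strictly larger than $\Rep(\cA)$ as soon as there exist solitons that are not representations (the generic situation, and the raison d'\^etre of the paper); the finite $\mu$-index is what makes that commutant computation exhaustive, and applying it twice gives $T_\cA''\simeq T_\cA$. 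The Main Theorem's identification $Z(T_\cA)=T_\cA\cap T_\cA'\cong\Rep(\cA)$ is then a \emph{corollary} of the commutant computation, not an input from which the bicommutant property can be bootstrapped. As for von Neumann algebras, where $M'$ is vastly larger than $Z(M)=M\cap M'$, the whole content of the bicommutant formalism lies in this gap, and your proposal in effect collapses it.
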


\noindent
The proof of this theorem can be found in our companion paper \cite{Bicommutant-categories-from-conformal-nets}.

\begin{remark*}
In our earlier paper \cite{BDH-cn0}, we had suggested using the $3$-category of conformal nets, constructed in \cite{BDH-cn1, BDH-cn3, BDH-cn4},
as a target category for extended 3-dimensional TQFTs, and
to have $\cA_{G,k}$ be the value of Chern-Simons theory on a point.
We conjecture that the construction $\cA\,\mapsto\, T_{\cA}$ extends to a fully faithful but maybe not essentially surjective 3-functor
\[
T\,\,:\,\,\,
\{\text{conformal nets}\}\to\{\text{bicommutant categories}\}.
\]
Such a functor would make our current proposal \eqref{eq: CS(pt)=Rep(OmG)} for the value of Chern-Simons theory on a point
`backwards compatible' with respect to our earlier proposal \cite{BDH-cn0}.
\end{remark*}

\section{$\CS(\pt)$ for disconnected groups}

In Sections \ref{I-sec 3}--\ref{I-sec 5}, the gauge group was always connected.
Let now $G$ be an arbitrary compact Lie group, and let $k\in H^4(BG,\Z)$ be a positive level.
We propose a general answer to the question of what Chern-Simons theory assigns to a point,
that simultaneously generalises the previous answers \eqref{eq: CS(pt)=Vect^k[G]} and \eqref{eq: CS(pt)=Rep(OmG)} in the cases of finite gauge group and connected gauge group respectively.

Let $\mathrm{Bun}_G(S^1;*)$ be the moduli space of $G$-bundles over $S^1$ trivialised to infinite order at the base point $*=1\in S^1$.
This stack has finitely many points (the isomorphism classes of $G$-bundles over $S^1$) classified by their monodromy in $\pi_0(G)$, and each point has an infinite dimensional isotropy group (the automorphism group of the $G$-bundle) which is isomorphic to $\Omega G$.
For ease of notation, we fix for every $[g]\in\pi_0(G)$ a principal bundle $P_{[g]}\in \mathrm{Bun}_G(S^1;*)$ with monodromy $[g]$.
A vector bundle $V$ over $\mathrm{Bun}_G(S^1;*)$ is then equivalent to a collection of representations $V_{[g]}$ of the groups $\mathrm{Aut}(P_{[g]})$.
The level $k\in H^4(BG,\Z)$ induces a gerbe over $\mathrm{Bun}_G(S^1;*)$,
which can equivalently be thought of as a collection of central extensions of the above isotropy groups.
We say that a $k$-twisted vector bundle over $\mathrm{Bun}_G(S^1;*)$ is \emph{locally normal} if its restriction to each $P_{[g]}\in\mathrm{Bun}_G(S^1;*)$ 
yields a locally normal representation of $\mathrm{Aut}(P_{[g]})\cong \Omega G$, in the sense we introduced.
We propose:
\begin{equation}\label{eq: T_G,k general}
\CS_{G,k}(\pt) = \big\{\parbox{9.8cm}{$k$-twisted locally normal vector bundles over $\mathrm{Bun}_G(S^1;*)$}\hspace{-.3mm}\big\},
\end{equation}
and conjecture that it is a bicommutant category.
This is equivalent (non-canonically) to the category of $\pi_0(G)$-tuples of locally normal representations of $\Omega G$ at level $k$.
%When $G$ is a disconnected group, we strongly suspect that the above tensor category is not of the form $T_\cA$ for any conformal net $\cA$.

The tensor structure on \eqref{eq: T_G,k general} is obtained by thinking about $G$-bundles over 
\tikz[scale=.5]{
\useasboundingbox (-1.2,-.05) rectangle (1.2,.85);
\draw circle (1) (-1,0) -- (1,0); \node[circle, fill=white, inner sep=0, scale =.8] at (1,0) {$*$};
},
trivialised at the base point $*$.
Given two locally normal vector bundles $V$ and \phantom{******} $W$
over $\mathrm{Bun}_G(S^1;*)$ and given a $G$-bundle $P$ over $S^1$, 
the value of $V\boxtimes W$ at $P$ is computed as follows.
Consider the finite set (indexed by $\pi_0(G)$) of isomorphism classes of extensions of $P$ over the above theta-graph,
and let $Q_i$ be representatives of the isomorphism classes.
Let $Q_i^+$ and $Q_i^-$ be the restrictions of $Q_i$ to the upper and lower halves of the theta-graph.
We identify those two halves
\tikz[scale=.4]{\useasboundingbox (-1.2,0) rectangle (1.2,.9);\draw (-1,0) arc (180:0:1) (-1,0) -- (1,0);\node[circle, fill=white, inner sep=0, scale =.75] at (1,0) {$*$};} and
\tikz[scale=.4]{\useasboundingbox (-1.2,-.2) rectangle (1.2,.7);\draw (-1,0) arc (-180:0:1) (-1,0) -- (1,0);\node[circle, fill=white, inner sep=0, scale =.75] at (1,0) {$*$};}
with $S^1$, so as to be able to view $Q_i^+$ and $Q_i^-$ as elements of $\mathrm{Bun}_G(S^1;*)$.

Let $G_0\subset G$ be the connected component of the identity.
Using a trivialisation of $Q_i$ over the middle edge of the theta-graph,
we get left and right actions of the algebra
$\cA_{G_0,k}(\tikz[scale=.5]{\useasboundingbox (-.5,-.2) rectangle (.5,.2);\draw (-.5,0) -- (.5,0);\draw[->] (.05,0) -- +(.01,0);})$ on the spaces $W(Q_i^+)$ and $V(Q_i^-)$ 
(the right action uses the identification
$\cA_{G_0,k}(\tikz[scale=.5]{\useasboundingbox (-.5,-.2) rectangle (.5,.2);\draw (-.5,0) -- (.5,0);\draw[->] (.05,0) -- +(.01,0);})^\op\cong
\cA_{G_0,k}(\tikz[scale=.5]{\useasboundingbox (-.5,-.2) rectangle (.5,.2);\draw (-.5,0) -- (.5,0);\draw[->] (-.05,0) -- +(-.01,0);})$).
The monoidal structure on the category of locally normal vector bundles over $\mathrm{Bun}_G(S^1;*)$
is then given by
\[
\big(V\boxtimes W\big)(P) := \bigoplus_i V(Q_i^-)\boxtimes_{\cA_{G_0,k}(\tikz[scale=.5]{\useasboundingbox (-.5,-.15) rectangle (.5,.25);\draw (-.5,0) -- (.5,0);\draw[->] (.05,0) -- +(.01,0);})}W(Q_i^+).
\]

The proposal \eqref{eq: T_G,k general} for $\CS(pt)$ unifies and generalizes the previous ones \eqref{eq: CS(pt)=Vect^k[G]} and \eqref{eq: CS(pt)=Rep(OmG)}.
It fits into the following commutative diagram:
\[
\tikzmath{
\node(A) at (0,0) {$\Bigg\{\,\parbox{2.4cm}{Unitary fusion\\\centerline{categories}}\,\Bigg\}$};
\node(B) at (5,0) {$\Bigg\{\,\parbox{2.2cm}{Bicommutant \\\centerline{categories}}\,\Bigg\}$};
\node(C) at (10.2,0) {$\Bigg\{\,\parbox{1.75cm}{Conformal \\\centerline{nets}}\,\Bigg\}$,};
\node(a) at (0,2.5) {$\left\{\parbox{2.6cm}{$ $ finite group $G$\\\centerline{+ level}\\\centerline{$k\in H^4(BG,\Z)$}}\right\}$};
\node(b) at (5,2.5) {$\left\{\parbox{3.6cm}{compact Lie group $G$\\\centerline{+ level}\\\centerline{$k\in H^4_+(BG,\Z)$}}\right\}$};
\node(c) at (10.2,2.5) {$\left\{\parbox{3.5cm}{$ $ compact connected\\\centerline{Lie group $G$\, {\footnotesize +} level}\\\centerline{$k\in H^4_+(BG,\Z)$}}\right\}$};
\draw[->] (A) --node[above, draw, circle, scale=.7, inner sep=2, yshift=4]{2} (B);
\draw[->] (C) --node[above, draw, circle, scale=.7, inner sep=2, yshift=4]{4} (B);
\draw[->, shorten <=5] (a) --node[pos=0, xshift=2.5, yshift=2.33]{$\scriptstyle \subset$} (b);
\draw[->, shorten <=5] (c) --node[pos=0, xshift=-2.5, yshift=2.33]{$\scriptstyle \supset$} (b);
\draw[->] (a) --node[left, draw, circle, scale=.7, inner sep=2, xshift=-4]{1} (A);
\draw[->] (b) --node[right, draw, circle, scale=.7, inner sep=2, xshift=4]{5} (B);
\draw[->] (c) --node[right, draw, circle, scale=.7, inner sep=2, xshift=4]{3} (C);
}
\]
where $H^4_+(BG,\Z)$ denotes the set of positive levels (Definition \ref{def: positive level}).
The arrow labelled~1 is the well known isomorphism $H^4(BG,\Z)\cong H^3(G,U(1))$.
The arrow labelled 2 is constructed in \cite{Bicommutant-categories-from-fusion-categories}.
The arrow labelled 3 is constructed in \cite[\S8]{WZW-classification}.
The arrow labelled 4 is constructed in \cite{Bicommutant-categories-from-conformal-nets} under the assumption that the conformal net has finite $\mu$-index.
The arrow labelled 5 is our proposal \eqref{eq: T_G,k general}, and
the fact that it produces a bicommutant category it is presently just a conjecture.

\section*{Acknowledgments}

I wish to thank Arthur Bartels and Chris Douglas for the long-time collaboration without which this work would not have been possible. 
I also thank Bruce Bartlett, David Ben-Zvi, Marcel Bischoff, Sebastiano Carpi, Dan Freed, Yi-Zhi Huang, Jacob Lurie, Hessel Posthuma, Chris Schommer-Pries, Urs Schreiber, Christoph Schweigert, Graeme Segal, Noah Snyder, Peter Teichner, Constantin Teleman, and Konrad Waldorf for many fruitful discussions, help, and advice.
Many thanks to MSRI for its hospitality during the spring of 2014.
At last, I gratefully acknowledge the Leverhulme trust, the EPSRC grant ``Quantum Mathematics and Computation'',
and the European Research Council under the European Union's Horizon 2020 research and innovation programme (grant agreement No 674978)
for financing my visiting position in Oxford.

\footnotesize
\bibliographystyle{abbrv}        \bibliography{CS(pt)}        \end{document}